\theoremstyle{plain}
\newtheorem{theorem}{Theorem}
\newtheorem{lemma}[theorem]{Lemma}
\newtheorem{corollary}[theorem]{Corollary}
\newtheorem{proposition}[theorem]{Proposition}
\newtheorem{conjecture}[theorem]{Conjecture}
\newtheorem{definition}{Definition}
\theoremstyle{definition}
\newtheorem{example}{Example}
\newtheorem{remark}{Remark}
\newcommand{\theor}[1]{Theorem~\ref{t.#1}}
\newcommand{\propo}[1]{Proposition~\ref{p.#1}}
\newcommand{\corol}[1]{Corollary~\ref{c.#1}}
\newcommand{\remar}[1]{Remark~\ref{r.#1}}
\newcommand{\lemme}[1]{Lemma~\ref{l.#1}}
\newcommand{\figur}[1]{Figure~\ref{f.#1}}
\newcommand{\equat}[1]{Equation~(\ref{q.#1})}
\newcommand{\equnm}[1]{(\ref{q.#1})}
\newcommand{\exemp}[1]{Example~\ref{e.#1}}
\newcommand{\EoP}{\hbox{}\hfill\qedsymbol\hbox{}}%
\newcommand{\slparagraph}[1]%
    {\par\noindent{\textit{#1}}\hspace*{.8em}\ignorespaces}
\newcommand{\jsListe}[1]%
    {\par\noindent
     \makebox[\parindent][r]{{\rm (#1)}}\hspace*{.8em}\ignorespaces}
\newcommand{\jsListb}[1]%
    {\noindent
     \makebox[2.2\parindent][r]{{\rm #1)}}\hspace*{0.8em}\ignorespaces}
\newcommand{\fa}{\forall}
\newcommand{\bk}{\setminus}
\newcommand{\UL}[1]{\underline{#1}}
\newcommand{\msp}{\hspace*{0.2em}} % espace pour faire ressortir
\newcommand{\xmd}{\hspace{0.125em}} % espace entre les symboles
\newcommand{\e}{\text{\quad}}                 % un moins petit espace
\newcommand{\ee}{\text{\qquad}}               % un espace
\newcommand{\eee}{\text{\qquad \qquad}} % et un grand
\renewcommand\leq\leqslant
\renewcommand\geq\geqslant
\renewcommand\epsilon\varepsilon
\newcommand{\eqpnt}{\makebox[0pt][l]{\: .}}
\newcommand{\eqvrg}{\makebox[0pt][l]{\: ,}}
\newcommand{\EqVrgInt}{\: , \e }
\newcommand{\quantvrg}{\, , \;}
\newcommand{\quantsp}{\ee }
\newcommand{\quantsmsp}{\e }
\newcommand{\cf}{\textit{cf.}\xspace}
\newcommand{\ipsofacto}{\textit{ipso facto}\xspace}
\newcommand{\via}{\textit{via}\xspace}
\newcommand{\ExtnF}[1]%
   {\overset{{\scriptscriptstyle \pmb{\smile}}}{#1}}
\newcommand{\DiffF}[1]%
   {\overset{{\scriptscriptstyle \pmb{\lor}}}{#1}}
\newcommand{\LocaF}[1]%
   {\overset{{\scriptscriptstyle \leftrightarrow}}{#1}}
\newcommand{\jsDist}[2][{}]%
\newcommand{\supp}{\operatorname{{\mathsf{supp}\,}}}
\renewcommand{\lim}{{\operatornamewithlimits{\mathsf{lim}}}}
\newcommand{\ETAze}[1]{0_{#1}}
\newcommand{\zeK}{\ETAze{\K}}
\newcommand{\x}{\! \times \!}
\newcommand{\SerSAnMon}[2]%
    {#1 \langle \! \langle  #2  \rangle \! \rangle }
\newcommand{\SerSAnMonD}[2]%
    {\left[#1\right] \langle \! \langle  #2  \rangle \! \rangle }
\newcommand{\SerMon}[1]%
    {\!\langle \! \langle  #1  \rangle \! \rangle }
\newcommand{\KAe}{\SerSAnMon{\K}{\Ae }}
\newcommand{\KA}{\KAe} % compatibilit\'e
\newcommand{\PolSAnMon}[2]%
    {{#1 \langle  #2 \rangle }}
\newcommand{\PolMon}[1]%
    {{\!\langle  #1 \rangle }}
\newsavebox{\LeftBraket}
\savebox{\LeftBraket}{\scalebox{0.7 1.2}{$<$}}
\newsavebox{\RightBraket}
\savebox{\RightBraket}{\scalebox{0.7 1.2}{$>$}}
\newcommand{\bra}[1]{\hbox{}\usebox{\LeftBraket}%
                           #1\usebox{\RightBraket}\hbox{}}
\newcommand{\Rat}{\mathrm{Rat}\,}
\newcommand{\NRat}{\N \Rat}
\newcommand{\lmn}{{(\lambda , \mu , \nu )}}
\newcommand{\lmnu}{{(\lambda _1, \mu _1, \nu _1)}}
\newcommand{\ekz}{{(\eta ,\kappa ,\zeta )}}
\renewcommand{\Alph}{A}
\newcommand{\Blph}{B}
\newcommand{\jsStar}[1]{{{#1}^{*}}}
\newcommand{\Ae}{\jsStar{\Alph}}
\newcommand{\Be}{\jsStar{\Blph}}
\newcommand{\iotaK}{\iota_{\ShiftInd{K}}}
\newcommand{\unAe}{{1_{\Ae}}}
\newcommand{\compos}{\ccdot }
\newcommand{\matmul}{\mathbin{\cdot}}
\newcommand{\hadam}{\mathbin{\odot }}
\newcommand{\phiikpsi}%
{{\varphi ^{-1}\! \compos        \iotaK \! \compos \! \psi }}
\newcommand{\phiiotpsi}[1]%
{{\varphi ^{-1}\! \compos        \iota _{\ShiftInd{#1}} \! \compos \! \psi }}
\newcommand{\phiintkpsi}[1]%
{{(#1\varphi ^{-1}\! \cap K) \psi }}
\newcommand{\jsgeq}{\geqslant }
\newcommand{\jsless}% 001102
   {\mathrel{\leqslant_{\!\!\!\!\scriptscriptstyle{/}}}}
\newcommand{\jsgrea}% 020120
   {\mathrel{\geqslant_{\!\!\!\!\scriptscriptstyle{\backslash}}}}
\newcommand{\lexiconeq}% 010125
   {\preccurlyeq_{\!\!\!\!\!\scalebox{1.8 1}{\scriptscriptstyle{\pmb{/}}}}}
\newcommand{\jsAutUn}[1]%  
   {\mbox{$\left\langle \thinspace #1 \thinspace \right\rangle $}}
\newcommand{\Tran}[1]{\bigl ( #1\bigr )}
\newcommand{\Acu}{{{\Ac_{1}}}}
\newcommand{\ShiftInd}[1]{\raisebox{-0.3ex}{$\scriptstyle{#1}$}}
\newcommand{\actb}{\mathbin{\raisebox{0.2ex}%
                        {${\scriptscriptstyle \circ} $}}}
\newcommand{\ccdot}{\actb} % pour compatibilit\'e
\newcommand{\bornedeuxlignes}[2]%
{\mbox{$
\begin{array}{c}{\scriptstyle #1}\\ {\scriptstyle #2} \end{array}
       $}}
\newcommand{\pathaut}[2]{\underset{#2}{\path{#1}}}
\newcommand{\ExpDer}[2][a]%
   {{\operatorname{\frac{\partial}{\partial \mbox{$#1$}}}#2}}
\newcommand{\ExpDerr}[2][a]%
    {\operatorname{\frac{\partial_{\mathrm{R}}}{\partial \mbox{$#1$}}}#2}
\newcommand{\ExpDerB}[2][a]% 020429
   {{\operatorname{\frac{\partial '}{\partial '\mbox{$#1$}}}#2}}
\newcommand{\comment}[1]{}
\newcommand{\radix}{\prec}
\newcommand{\card}[1]{\mathrm{card} \left( {#1} \right)}
\newcommand{\lgt}[1]{|#1|}
\newcommand{\K}{\mathbb{K}}
\newcommand{\N}{\mathbb{N}}
\newcommand{\Z}{\mathbb{Z}}
\newcommand{\Ac}{\mathcal{A}}
\newcommand{\Cc}{\mathcal{C}}
\newcommand{\Dc}{\mathcal{D}}
\newcommand{\Sc}{\mathcal{S}}
\newcommand{\Ak}{A_{k}}
\newcommand{\muf}[1]{\mu\left(#1\right)}
\newcommand{\kappaf}[1]{\kappa\left(#1\right)}
\newcommand{\Repr}[2]{\langle#2\rangle_{\ShiftInd{#1}}}
\newcommand{\RU}[1]{\Repr{U}{#1}}
\newcommand{\RL}[1]{\Repr{L}{#1}}
\newcommand{\RLu}[1]{\Repr{L_{1}}{#1}}
\newcommand{\RSc}[1]{\Repr{\Sc}{#1}}
\newcommand{\Ev}{\pi}
\newcommand{\Eval}[2]{\mathop{\Ev_{\ShiftInd{#1}}}\left(#2\right)}
\newcommand{\EvSc}[1]{\Eval{\Sc}{#1}}
\newcommand{\EvL}[1]{\Eval{L}{#1}}
\newcommand{\EvLu}[1]{\Eval{L_{1}}{#1}}
\newcommand{\Enm}{\mathbf{E}}
\newcommand{\Enum}[1]{\Enm_{\ShiftInd{#1}}}
\newcommand{\EnSc}{\Enum{\Sc}}
\newcommand{\EnL}{\Enum{L}}
\newcommand{\Pof}[1]{P(#1)}
\newcommand{\Pu}{\Pof{u}}
\newcommand{\Pua}{\Pof{u\xmd a}}
\newcommand{\Lu}{L_{1}}
\newcommand{\Succ}[1]{\mathsf{Succ} _{#1}}
\newcommand{\SuccL}{\Succ{L}}
\newlength{\lga}\newlength{\lgb}
\begin{document}
\title{On the enumerating series \\ of an abstract numeration system}
\author{Pierre-Yves Angrand\thanks{
LTCI (UMR 5141), Telecom ParisTech,
46 rue Barrault, 75634 Paris Cedex 13, France,
{\small{\tt angrand@enst.fr}} --- Corresponding author.}
\and Jacques Sakarovitch\thanks{
LTCI (UMR 5141), CNRS / Telecom ParisTech,
% 46 rue Barrault, 75634 Paris Cedex 13, France
{\small{\tt sakarovitch@enst.fr}}.}
}
%\keywords{rational series, abstract numeration system}
% \subjclass{68Q45,68Q70}
\date{}
\maketitle

\begin{abstract}
It is known that any rational abstract numeration system is 
faithfully, and effectively, represented by an $\N$-rational series. 
A simple proof of this result is given which yields a representation 
of this series which in turn allows a simple computation of the value 
of words in this system and easy constructions for the recognition of 
recognisable sets of numbers.

It is also shown that conversely it is 
decidable whether an $\N$-rational series corresponds to a rational 
abstract numeration system. 
\end{abstract}

\section{Introduction}

In order to state our result, we have first to recall the definition 
--- due to Lecomte and Rigo \cite{LecoRigo01} --- of an 
\emph{abstract numeration system} and, in order to motivate it, the more 
common one of numeration systems.

Numbers do exist independently of the way we represent them, and 
operations on numbers are defined independently of the way they are 
computed.
The role of a numeration system is to set a framework in which numbers 
are represented by \emph{words} (over a suitable alphabet) allowing 
to describe operations on numbers as algorithms on the 
representations, that is, on words.

The most common numeration system --- in our modern times ---
is the $k$-ary system where numbers are given their representation 
\emph{in base~$k$}, that is, written as words over the alphabet
$\msp \Ak = \{0, 1, \ldots, k-1\}\msp$ and which do not start 
with~$0$ (but for the representation of~$0$ itself).
The sequence of the representations of the integers in the binary 
system is:
$\msp
\{0, 1, 10, 11, 100, 101, 110, \ldots\}
\msp$.

While keeping the notion of position numeration system, the $k$-ary 
systems can be generalised by replacing the 
sequence
$\msp(k^{n})_{n\jsgeq 0}\msp$ 
with some increasing sequence
$\msp U=(U_{n})_{n\jsgeq 0}\msp$ of integers such that
$\msp U_{0}=1\msp$.
Using a greedy algorithm, every integer~$n$ is then given a 
representation in the `base'~$U$, called its $U$-representation and 
denoted by~$\RU{n}$.
A well-known example is the Fibonacci numeration system based on the 
sequence
$\msp F=(F_{n})_{n\jsgeq 0}\msp$
of Fibonacci numbers starting with
$\msp F_{0}=1\msp$ and $\msp F_{1}=2\msp$.
In this system, every positive integer is given a \emph{canonical} 
representation which is computed by the greedy algorithm and which is 
characterised by the fact it does not contain~$11$ as a factor.
The sequence of the representations of the integers in the Fibonacci 
system is:
$\msp
\{0, 1, 10, 100, 101, 1000, 1001, 1010, \ldots\}
\msp$.

It is possible to look at these two numeration systems, the $2$-ary 
system and the Fibonacci system, independently from the sequences
$\msp(2^{n})_{n\jsgeq 0}\msp$
and 
$\msp (F_{n})_{n\jsgeq 0}\msp$
and the greedy algorithm, 
and by just considering \emph{the set of words} that represent the 
integers:
$\msp 1\{0,1\}^{*}\cup\{0\} \msp$
in the first case,
$\msp 1\{0,1\}^{*}\bk \{0,1\}^{*}11\{0,1\}^{*} \cup \{0\} \msp$
in the second case
and by \emph{enumerating the element of this set} in the radix 
order.\footnote{%
    The definition of radix order will be given below.}
In both cases, every integer will be given the same representation 
without reference to the way this representation is computed.
It is the language of all representations that matters and this
naturally leads to the definition of abstract numeration 
systems.

\begin{definition}[\cite{LecoRigo01}]
\label{d.ans}
An \emph{abstract numeration system}
(or \emph{ANS} for short) is a triple $\Sc=(L,A,<)$ 
where~$A$ is an alphabet equipped with a total order~$<$ and~$L$ 
is an infinite language of~$\Ae$.

The system~$\Sc$ allows to define a one-to-one correspondence 
between~$\N$ and~$L$ by associating every integer~$n$ with the 
$(n+1)$-th word of~$L$ in the radix order defined on~$\Ae$ by~$<$.
This \emph{representation} of~$n$ is denoted by~$\RSc{n}$ and 
conversely the corresponding \emph{value} of a word~$w$ of~$L$ is 
denoted by~$\EvSc{w}$.
Of course, the following holds:
\begin{equation}
    \RSc{\EvSc{w}}= w \ee \text{and}\ee \EvSc{\RSc{n}} = n
\eqpnt 
\notag
%       \label{}
\end{equation}
In most cases, the alphabet~$A$ and the order~$<$ on~$A$ are 
fixed and understood and we speak of the ANS defined by the 
language~$L$ and we use the simpler notations~$\RL{n}$ and~$\EvL{w}$. 

If~$L$ is a rational language of~$\Ae$, we say that the ANS is 
rational.    
\end{definition}

\begin{example}
    \label{e.eve}
    Let
    $\msp A= \{a,b\}\msp$, with
    $\msp a<b \msp$ and let~$\Lu$ be the language of words with an 
    even number of $b$'s:
    $\msp \Lu = 
       \left\{ w\in\{a,b\}^{*}\mid \lgt{w}_{b}\equiv 0\mod 2\right\}\msp$.
    The sequence of the representations of the integers is:
    $\msp\{\epsilon, a, aa, bb, aaa, abb, bab, \ldots\}\msp$ and, 
    for instance,
    \begin{equation}
        \RLu{18}= a\xmd a\xmd b\xmd a\xmd b 
        \ee \text{and}\ee 
        \EvLu{b\xmd b\xmd a\xmd b\xmd b} = 29
    \eqpnt 
    \notag
    %       \label{}
    \end{equation}
\end{example}

Beyond the irrepressible appeal to generalisation and abstraction, a 
true motivation that supports the definition of ANS is 
to understand 
which properties of a numeration system depend upon the whole
language of the representations only, and which are 
more directly related to the way the representation of 
every number is computed.
For instance, we have shown in a previous paper~\cite{AngrSaka10} 
that the successor function in a  
rational ANS is a piecewise cosequential function, 
whereas the characterisation of those 
systems for which this successor function is co-sequential is known 
in the case of $\beta$-numeration systems (\cf~\cite{Frou97}) but 
seems to be out of reach for arbitrary rational ANS so far.

The purpose of this paper is to set up even tighter bonds between 
rational abstract numeration systems and classical automata theory.
We reach this goal \via the definition of the \emph{enumerating 
series} of a numeration system and with the use of its \emph{representation} 
in the case it is rational.

\begin{definition}
    \label{d.enu-ser}
    Let $\Sc=(L,A,<)$  be an {abstract numeration system}.
    The enumerating series of~$\Sc$ is the $\N$-series over~$\Ae$ 
    denoted by~$\EnSc$ and defined by:
    \begin{equation}
        \EnSc= \sum_{w\in L} \left(\EvSc{w}+1\right)\xmd w
	\eqpnt 
	\notag
%         \label{}
    \end{equation}
As above, the notation can be simplified as
$\msp\EnL= \sum_{w\in L} \left(\EvL{w}+1\right)\xmd w\msp$.
\end{definition}

\begin{remark}
    \label{r.enu-ser}
    The above definition has been taken so that the language~$L$ is 
    entirely determined by~$\EnL$. Indeed, 
    \begin{equation}
        L = \supp(\EnL)
	\eqpnt 
	\notag
%         \label{}
    \end{equation}
One certainly could have taken
$\msp\EnL= \sum_{w\in L} \left(\EvL{w}\right)\xmd w\msp$
as a definition for~$\EnL$. 
All the results we are going to describe would have been valid and it 
may have looked more natural. 
But we would have lost the information on the first word of~$L$, that 
is, the representation of~$0$.
\end{remark}

The starting point of our work is a direct proof of the following 
result (the definition of $\N$-rational series will be recalled below).

\begin{theorem}[\cite{ChofGold95}]
    \label{t.enu-ser-rat}%
    The enumerating series of a rational abstract numeration system 
    is an $\N$-rational series. 
\end{theorem}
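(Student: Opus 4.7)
The plan is to decompose $\EnL$ as a sum of three $\N$-rational series, reflecting the three reasons why a word $u\in L$ may be no larger than $w$ in the radix order: $|u|<|w|$, or $|u|=|w|$ with $u<_{\mathrm{lex}} w$, or $u=w$. Writing $\chi_L$ for the characteristic series of $L$, which is $\N$-rational as $L$ is rational, and defining for every $w\in\Ae$
\begin{equation}
T(w)=\bigl|\{u\in L\mid |u|<|w|\}\bigr|
\ee\text{and}\ee
S(w)=\bigl|\{u\in L\mid |u|=|w|,\ u<_{\mathrm{lex}} w\}\bigr|\eqvrg
\notag
\end{equation}
these series satisfy
\begin{equation}
\EnL \,=\, \chi_L\hadam T \,+\, \chi_L\hadam S \,+\, \chi_L
\eqpnt
\notag
\end{equation}
Since $\N$ is commutative, $\N$-rational series over $\Ae$ are closed under sum and Hadamard product, so the theorem reduces to showing that $T$ and $S$ are $\N$-rational.

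The value $T(w)$ depends on $w$ only through its length: $T(w)=f(|w|)$ where $f(n)=N_0+\cdots+N_{n-1}$ and $N_\ell=|L\cap A^\ell|$. Fixing a deterministic trim automaton $\Ac=\autadf$ for $L$ whose $\N$-valued transition matrix, initial vector and final vector are $M$, $\lambda$ and $\nu$, one has $N_\ell=\lambda M^\ell\nu$, and the block identity
\begin{equation}
f(n) \,=\, (\,0,\,\lambda\,)\,
\begin{pmatrix} M & 0 \\ I & I \end{pmatrix}^{n}\,
\begin{pmatrix}\nu\\0\end{pmatrix}
\notag
\end{equation}
exhibits $f$ as an $\N$-rational sequence. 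Using the block matrix above as the transition matrix of \emph{every} letter $a\in A$, with $(0,\lambda)$ and $(\nu,0)$ as initial and final weight vectors, then produces an $\N$-weighted automaton whose weight at $w$ is $f(|w|)=T(w)$.

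The series $S$ is realised by a two-layer $\N$-weighted automaton on the state set $(Q\times\{1\})\cup(Q\times\{2\})$, with initial weight $1$ at $(i,1)$ and final weight $1$ at every $(q,2)$ with $q\in T$. Layer~$1$ mimics $\Ac$ along the common prefix of the candidate $u$ and of $w$, through transitions $(p,1)\xrightarrow{a}(\delta(p,a),1)$ of weight~$1$; the jump to layer~$2$, via transitions $(p,1)\xrightarrow{a}(\delta(p,b),2)$ of weight~$1$ for every $b<a$, records the position at which $u$ first falls strictly below $w$; and inside layer~$2$, where the transitions $(p,2)\xrightarrow{a}(q,2)$ carry weight $|\{b\in A\mid\delta(p,b)=q\}|$ independently of $a$, the automaton freely enumerates continuations of $u$ while consuming the tail of $w$ as a mere length counter. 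A direct enumeration of accepting paths then shows that the weight of $w$ in this automaton is exactly $S(w)$.

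The main point to verify carefully is the construction of the layer-$2$ transitions: because their weights do not depend on the letter consumed, the cumulative weight of any length-$\ell$ layer-$2$ traversal from $(q,2)$ equals the total number of words $v\in A^\ell$ with $\delta(q,v)\in T$, irrespective of which particular letters of $w$ are being read. Once $T$ and $S$ are in hand, the assembly of $\EnL$ is a routine application of the closure of $\N$-rational series under sum and Hadamard product.
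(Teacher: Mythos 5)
Your proof is correct, and it takes a route that is organised differently from the paper's. Both arguments share the same backbone: count, for each word $w$, the number of words of $L$ that precede $w$ in the radix order by means of the matrices of an automaton recognising $L$, and recover $\EnL$ at the end by a Hadamard product with the characteristic series of $L$ (Sch\"utzenberger's theorem on the Hadamard product, valid since $\N$ is commutative). The difference lies in how the counting series is shown rational. The paper proves a single inductive identity on the sets $P(u)$ of radix-predecessors, namely $\Pua=\unAe\cup uA_a\cup\Pu A$ (its \lemme{enu-ser-rat}), and turns it into one block-triangular $(2k+1)$-dimensional representation $\ekz$ computing $1+\card{\{v\in L\mid v\radix u\}}$; this single representation is then reused verbatim for the efficient evaluation algorithm of Section~3.3 and for \propo{rep-rec-sub}. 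You instead split the predecessors of $w$ according to the two clauses of the radix order --- strictly shorter words, handled by the prefix-summing block matrix $\left(\begin{smallmatrix} M & 0\\ I & I\end{smallmatrix}\right)$ assigned to every letter, and equal-length lexicographically smaller words, handled by your two-layer automaton --- and assemble $\EnL$ by sum and Hadamard product. Your two constructions in fact use exactly the paper's matrices in disguise: the jump transitions realise $\sigma_a=\mu(A_a)$ and the layer-2 transitions realise $\sigma=\mu(A)$, and determinism of $\Ac$ is what prevents overcounting (the paper only needs unambiguity, a mild gain in generality that costs nothing since every rational language has a DFA). Two cosmetic points: your final-state set and your series $T$ share the same name, and the parallel jump transitions with equal source, label and target should be understood as having their weights added. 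Both routes are effective and give representations of comparable polynomial size; the paper's single-representation formulation buys the later algorithmic consequences, while yours makes the combinatorics of the radix order more immediately visible.
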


In~\cite{ChofGold95}, \theor{enu-ser-rat} was a corollary of 
constructions set up for 
establishing the rationality or algebraicity of a family of counting 
problems by means of rational transductions.
\theor{enu-ser-rat} was also given another and specialised proof 
in~\cite{Rigo01}. 
Even if both this and the original proofs are effective, the one we 
give below in Section~3 amounts to compute directly 
a \emph{representation} of~$\EnL$ 
from a representation of (the characteristic series of)~$L$
and also to give an even more compact algorithm for 
calculating the coefficient of a word~$w$ in~$\EnL$, that is, the 
value of~$w$ in the system~$L$ increased by~1.
We then deduce from this latter algorithm the construction of the 
automaton that recognises the set of  
representations in the system~$L$ of a recognisable set of numbers 
(Section~4). 
It is to be noted that the same last construction was also given 
in~\cite{KrieEtAl09} (\cf \remar{ref-dec}).

The next result plays the role of a converse of \theor{enu-ser-rat}: 
of course not every $\N$-rational series is the enumerating series of 
a rational number system, but one can at least know when it is the 
case.

\begin{theorem}
    \label{t.enu-ser-dec}%
    It is decidable whether an $\N$-rational series is the enumerating 
    series of a (rational) abstract numeration system or not.
\end{theorem}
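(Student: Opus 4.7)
The plan is to exploit the fact that an abstract numeration system is essentially determined by its enumerating series, and then to recover the putative system and verify it against~$S$. Given an $\N$-rational series~$S$ over an alphabet~$A$, we first observe, by \remar{enu-ser}, that the only possible candidate language is $L := \supp(S)$: if $S=\EnL$ for some ANS $(L,A,<)$, then $L$ must coincide with $\supp(S)$. The support of an $\N$-rational series is a rational language, because~$\N$ is positive (a sum is zero iff all summands are zero), and it can be computed effectively from any linear representation of~$S$ by projecting each nonzero weight onto~$1$ and reading the result as a $\B$-automaton. Moreover, it is decidable whether $L$ is infinite; if it is not, then $S$ cannot be the enumerating series of any ANS, and we answer "no".

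The order~$<$ on~$A$ is a priori unknown, but there are only $|A|!$ total orders on the finite alphabet~$A$, so we can enumerate them. For each candidate order~$<$, the constructive proof of \theor{enu-ser-rat} (to be given in Section~3) yields an effectively computable $\N$-rational representation of the enumerating series of the ANS $(L,A,<)$. It then suffices to test whether $S=\EnL$. Since both series are $\N$-rational, they are in particular $\Z$-rational; equality of $\Z$-rational series is classically decidable (view it as the zeroness of the difference, which can be tested by checking that the initial vector of a minimal representation is orthogonal to the reachable subspace). The algorithm returns "yes" if some order~$<$ yields $S=\EnL$, and "no" otherwise.

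The main obstacle is to make the support computation and the construction of $\EnL$ from a representation of~$L$ fully effective; both rest on the positivity of~$\N$ and on the constructive version of \theor{enu-ser-rat} developed in Section~3. The rest of the argument relies on well-known decidability results, namely the decidability of infiniteness for rational languages and of equivalence of $\Z$-rational series. The exhaustive search over the finitely many orders on~$A$ is conceptually trivial but responsible for the only combinatorial blow-up in the procedure; for the purpose of the decidability statement, this is harmless.
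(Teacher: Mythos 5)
Your proposal is correct, but it follows a genuinely different route from the paper. You reconstruct the unique candidate: take $L=\supp(S)$ (effective since $\N$ is positive, and forced by \remar{enu-ser}), check infiniteness, then for each of the finitely many orders on~$A$ run the effective construction of \theor{enu-ser-rat} to obtain a representation of~$\EnL$, and decide $S=\EnL$ by equivalence of $\Z$-rational series. The paper instead never rebuilds~$\EnL$: it uses the fact that the successor function~$\SuccL$ of the radix-ordered language~$L$ is a rational, hence unambiguous, function (a cited, non-trivial result), forms the $\Z$-rational series $t=s-\UL{\SuccL}(s)$ whose coefficients are the differences of consecutive coefficients of~$s$ along~$L$, and reduces the question to the zeroness of $t-\UL{L\bk\{w_{0}\}}$, again decidable because $\Z$ embeds in a field. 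So both arguments bottom out in the same classical decidability for $\Z$-rational series; yours buys independence from the successor-function machinery (only \theor{enu-ser-rat} plus standard closure/decidability facts are needed) and explicitly handles the unknown order by exhausting the $|A|!$ possibilities, a point the paper leaves implicit by treating the order as given, while the paper's argument buys a direct structural characterisation (coefficients increase by exactly~$1$ along the radix enumeration of~$L$) without re-running the construction of Section~3. One small slip in your parenthetical justification of equivalence testing: zeroness is checked by verifying that the \emph{final} vector is orthogonal to the reachable subspace spanned by the vectors $\lambda\matmul\muf{w}$ (equivalently, that a minimal representation has dimension~$0$), not the initial vector; this does not affect the argument, since decidability of equivalence over~$\Z$ is classical.
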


We end the paper with some problems that are directly inspired by 
\theor{enu-ser-rat}.

\section{Preliminary and notation}

This paper makes use of several notions of automata theory such as 
unambiguous, or deterministic, automata, rational series and 
languages, with which the reader is supposed to be familiar. 
Definitions that are not given here are to be found in 
reference books such as~\cite{Eile74,BersReut88,Saka03}.
Our notation are mainly those used in~\cite{Saka03}.

In the sequel, $A$ is a finite alphabet, $\Ae$ the free monoid
generated by~$A$, $\unAe$ the empty word, identity of~$\Ae$.
The length of a  word~$w$ in~$\Ae$ is denoted by~$|w|$.
Let~$A$ be \emph{totally ordered} by~$<$.  
The \emph{radix order}~$\radix$ on~$\Ae$ %is the order whose strict part
is defined by:\footnote{%
   Notice that $\radix$ is not reflexive and is not the order but the 
   \emph{strict part} of the radix order.}
\begin{equation}
    u \radix v \ee\text{if}\e
    \left\{
    \begin{array}{ll}
        \text{either} & |u| < |v| \eqvrg   \\
        \text{or}     & |u| = |v| \EqVrgInt
                        u=w\xmd a\xmd u'\EqVrgInt
                        v=w\xmd b\xmd v' \e\text{and}\e
                        a < b \eqpnt
    \end{array}
    \right.
    \notag
%     \label{}
\end{equation}
The radix order is a \emph{well order}, that is, every non empty subset
of~$\Ae$ has a smallest element for~$\radix$ and can thus be used to 
\emph{enumerate} any subset of~$\Ae$.

Let~$\K$ be a semiring; for instance, $\N$, the semiring of non 
negative integers.
A ($\K$-)\emph{series}~$s$ (over~$\Ae$) is a 
\emph{map} from~$\Ae$ to~$\K$, and the image of a word~$w$ by~$s$ is 
called the \emph{coefficient of~$w$ in~$s$} and is denoted 
by~$\bra{s,w}$. 
The set of series over~$\Ae$ with coefficients in~$\K$ is denoted 
by~$\KA$.
The \emph{support} of a series~$s$ is the language, denoted 
by~$\supp s$, which contains those words whose 
coefficient in~$s$ is different from~$\zeK$.
Conversely, the \emph{characteristic series} of a language~$L$ 
of~$\Ae$ is the $\N$-series, denoted by~$\UL{L}$, defined by
$\bra{\UL{L},w}=1$ if~$w$ is in~$L$ and  $\bra{\UL{L},w}=0$ otherwise.

We call ($\K$-)\emph{representation}, of \emph{dimension}~$n$, a 
triple~$\lmn$ where~$\mu$ is a morphism
$\msp\mu\colon\Ae\rightarrow\K^{n\x n}\msp$ from~$\Ae$ to the 
$n\x n$-matrices with entries in~$\K$,
and~$\lambda$ and~$\nu$ are two vectors of dimension~$n$ with entries 
in~$\K$, $\lambda$ a row vector and~$\nu$ a column vector.
A series~$s$ in~$\KAe$ is ($\K$-)\emph{recognisable} if there exists a 
representation~$\lmn$ such that, for every~$w$ in~$\Ae$,
\begin{equation}
    \bra{s,w} = \lambda\matmul\muf{w}\matmul\nu
    \eqpnt
    \notag
\end{equation}

A series is ($\K$-)\emph{rational} if it is the behaviour of a finite 
($\K$-)automaton, that is, an automaton with multiplicity in~$\K$
(the behaviour of an automaton~$\Ac$ is the series where the 
coefficient of a word~$w$ is the sum of the multiplicities of all 
computations in~$\Ac$ with label~$w$).
Finite $\K$-automata whose transitions are labelled by 
\emph{letters} and $\K$-representations are two ways to describe the 
same concept.\footnote{%
   This is true only because~$\Ae$ is a \emph{free monoid}.}
The illustration given with \exemp{eve} suffices for the definition.
As every $\K$-automaton is equivalent to one which is labelled by 
letters, the  
families of $\K$-rational and $\K$-recognisable series coincide.

\begin{example} %, with $a<b$
    \label{e.eve-2}
The language~$\Lu$ of words with an even number
of~$b$'s is recognised by the automaton~$\Acu$ drawn 
at~\figur{rec-num-1}. 
The representation~$\lmnu$ associated with~$\Acu$ is
\begin{equation}
\lambda_{1} = \begin{pmatrix}1 & 0\end{pmatrix}
\EqVrgInt 
\mu_{1} (a) = 
\begin{pmatrix}
    1 & 0 \\ 0 & 1
\end{pmatrix}
\EqVrgInt 
\mu_{1} (b) = 
\begin{pmatrix}
    0 & 1 \\ 1 & 0
\end{pmatrix}
\EqVrgInt 
\nu_{1} = 
\begin{pmatrix}
    1 \\  0
\end{pmatrix}
\eqpnt 
\notag
\end{equation}
\end{example}

In the sequel, we mostly use~$\N$ as the semiring, and we may call 
\emph{representation} an $\N$-representation.

\setlength{\lga}{4cm}
\begin{figure}[ht]
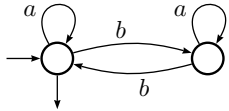

\centering
\VCDraw{%[q_1][q_2]
% \begin{VCPicture}{(-5,-1)(5,2)}
% \State{(-0.5\lga,0)}{A}
% \State{(0.5\lga,0)}{B}
\begin{VCPicture}{(0,-1)(\lga,1.5)}
%     \SmallState[.5][.5]
\State{(0,0)}{A}\State{(\lga,0)}{B}
\Initial{A}\Final[s]{A} 
\ArcL{B}{A}{b}\ArcL{A}{B}{b}
\LoopN{B}{a}\LoopN{A}{a}
\end{VCPicture}}
\caption{A \emph{DFA} accepting words with an even number of $b$'s}
\label{f.rec-num-1}
\end{figure}
\setlength{\lga}{3cm}

\section{Representation of the enumerating series}

The proof of \theor{enu-ser-rat}, as given in~\cite{ChofGold95} where 
\theor{enu-ser-rat} is (part of) Corollary~8, is based on the  
construction of an unambiguous rational transduction that associates 
to every word~$u$ all words~$v$ that are greater than~$u$ in the 
radix order.
From this, it is easy to derive that the image of the characteristic 
series of a rational language~$L$ is a recognisable series, and equal 
to~$\EnL$ --- up to the intersection (or Hadamard product) with~$L$.
The advantage of this construction is that it can be applied to 
unambiguous context-free languages and to various other counting 
functions as well.
The inconvenient is that it does not provide directly the 
representation of~$\EnL$, although it is very similar to the one we 
develop below.

In~\cite{Rigo01}, \theor{enu-ser-rat} is Proposition~29; its proof is 
more direct than in~\cite{ChofGold95} in the sense it does not rely 
on the rational transduction machinery but makes use instead of the 
characterisation of recognisable series as those which belong to a 
finitely generated stable submodule of~$\KA$.
But this proof yields neither the representation of~$\EnL$ nor a 
simple mean to compute it.

\subsection{Preparation}

If~$a$ is a letter of~$A$, 
let us denote by~$A_a$ the set of letters of~$A$ smaller than~$a$:
\begin{equation}
    A_a = \{b\in A\mid b<a\}
\eqpnt
\notag 
\end{equation}
If~$u$ be a word of~$\Ae$, let us denote by~$\Pu$ the set 
of words of~$\Ae$ (strictly) smaller than~$u$ in the radix order:
\begin{equation}
\Pu = \{v\in\Ae\mid v \radix u\}
\eqpnt
\notag 
\end{equation}
This set~$\Pu$ can be defined by induction on the length of~$u$ by 
the following remark.
Any word smaller than~$u$ followed by \emph{any} letter is smaller 
than~$u\xmd a$, and so is~$u\xmd b$ for any letter~$b$ smaller 
than~$a$, and the empty word is also smaller than~$u\xmd a$.
These three sets are pairwise disjoint and any word smaller 
than~$u\xmd a$ falls in one of them.
Altogether, we have proved the following lemma:\footnote{%
   \cf \remar{ref-dec} below.}
\begin{lemma}
    \label{l.enu-ser-rat}
\ee    
$\msp \fa u\in\Ae\quantvrg \fa a \in A \quantsp
\Pua = \unAe\cup u A_a  \cup \Pu A \msp$.
% 
% \eqpnt
% \eee
% \label{q.dec-wor}
% \end{equation}
% \begin{equation}
%     \fa u\in\Ae\quantvrg \fa a \in A \quantsp
% \Pua = \unAe\cup u A_a  \cup \Pu A \eqpnt
% \eee
% \label{q.dec-wor}
% \end{equation}
\end{lemma}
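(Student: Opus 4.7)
The plan is to prove the equality by a direct case analysis on any word $w$ belonging to $\Pua$, classifying $w$ according to its length compared to $|ua|=|u|+1$. Since $w\radix ua$ implies $|w|\jsleq |u|+1$ by definition of the radix order, only two cases arise.

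First I would handle the short case $|w|<|u|+1$. If $w=\unAe$, then $w$ lies in the first component of the right-hand side. Otherwise $w$ factors uniquely as $w=v\xmd b$ with $b\in A$, and $|v|=|w|-1<|u|$, so $v\radix u$ (strict inequality of lengths), giving $w\in \Pu\msp A$. Next I would treat the equal-length case $|w|=|u|+1$. Writing $w=v\xmd b$ with $|v|=|u|$, I would compare $v\xmd b$ with $u\xmd a$ lexicographically: either $v=u$ and then $b<a$ (so $w\in u\xmd A_a$); or $v\neq u$, in which case the first position where $w$ and $u\xmd a$ differ lies within $v$, meaning $v\radix u$ (same length, $v$ lex-smaller) and hence $w\in \Pu\msp A$.

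For the reverse inclusion, each of the three pieces $\unAe$, $u\xmd A_a$, $\Pu\msp A$ is straightforwardly checked to lie in $\Pua$: $\unAe$ is shortest; $u\xmd b$ with $b<a$ matches $u\xmd a$ on the first $|u|$ letters and then is smaller; and for $v\xmd b$ with $v\radix u$, one distinguishes the sub-cases $|v|<|u|$ (then $|v\xmd b|\jsleq |u|<|u|+1$) and $|v|=|u|$ (then the lex-witness between $v$ and $u$ is also a lex-witness between $v\xmd b$ and $u\xmd a$). Finally, to justify the reading of the equality as a disjoint union (useful for later counting arguments), I would check pairwise disjointness: $\unAe$ is shorter than the other two components; and a word of the form $u\xmd b$ with $b<a$ cannot lie in $\Pu\msp A$, for such a factorisation would force $v=u$ while requiring $v\radix u$.

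There is no real obstacle: the whole statement is a tautological unfolding of the two-clause definition of the radix order, and the only point that needs care is the base case $w=\unAe$, which is why the $\unAe$ summand must appear explicitly on the right-hand side. The proof is genuinely an induction ingredient, not a result whose depth lies in itself, so I would keep it short and purely combinatorial.
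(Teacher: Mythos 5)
Your proof is correct and takes essentially the same route as the paper, whose argument is precisely that any word smaller than $u$ followed by any letter, any $u\xmd b$ with $b<a$, and the empty word are all smaller than $u\xmd a$, that these three sets are pairwise disjoint, and that every word smaller than $u\xmd a$ falls into one of them. Your explicit case split on length and your disjointness check (which the paper also records, since it matters for the subsequent counting via the unambiguous representation) merely spell out the same observation in more detail.
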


Let~$L$ be a rational language of~$\Ae$ and~$\lmn$ the 
$\N$-representation which  
corresponds to an \emph{unambiguous} finite automaton which 
recognises~$L$:
\begin{equation}
    \fa w\in\Ae \quantsp
\lambda\matmul\muf{w}\matmul\nu = 1
\e\Longleftrightarrow\e
w\in L
\eqpnt
\eee
\notag
% \label{q.}
\end{equation}
We use the following notation: if~$K$ is a (finite) subset of~$\Ae$, 
then
$\msp\muf{K}=\sum_{w\in K}\muf{w}\msp$.
As~$\lmn$ corresponds to an \emph{unambiguous} automaton, we have:
\begin{equation}
    \fa K\subseteq\Ae \quantsp
\lambda\matmul\muf{K}\matmul\nu = 
\sum_{w\in K}\lambda\matmul\muf{w}\matmul\nu=
\card{K\cap L}
\eqpnt
\eee
% \notag
\label{q.cnt-rep}
\end{equation}

\subsection{Proof of \theor{enu-ser-rat}}
    
Let $\Sc=(L,A,<)$ be a rational ANS, 
$\Ac$ an unambiguous automaton that recognises~$L$, 
$\lmn$ the corresponding $\N$-representation, and~$k$ its dimension.
From~\equnm{cnt-rep} follows:
%     
% In particular, we have:
\begin{equation}
    \fa u\in\Ae \quantsp
\lambda\matmul\muf{\Pu}\matmul\nu = 
\card{\{v\in\Ae\mid v\in L \e\text{and}\e v \radix u\}}
\eqpnt
\eee
\notag
% \label{q.}
\end{equation}
and thus:
\begin{equation}
    \fa w\in L \quantsp
\lambda\matmul\muf{\Pof{w}}\matmul\nu =
\EvL{w}
\eqpnt
\eee
\notag
% \label{q.}
\end{equation}
% Let us first establish that~$\lambda\matmul\muf{\Pu}\matmul\nu$ is 
% computed by an $\N$-representation.
From \lemme{enu-ser-rat} follows:\footnote{%
   \cf \remar{ref-dec} below.}
\begin{multline}
    \fa u\in\Ae\quantvrg \fa a \in A \quantsp\\
\lambda\matmul\muf{\Pua}\matmul\nu = 
                \lambda\matmul\muf{\unAe}\matmul\nu
              + \lambda\matmul\muf{u}\matmul\muf{A_a}\matmul\nu \ee\\ 
              + \lambda\matmul\muf{\Pu}\matmul\muf{A}\matmul\nu 
\eqpnt\ee
% \notag
\label{q.pua-mu-1}
\end{multline}
Let
$\sigma=\muf{A}$ and, for every~$a$ in~$A$,
$\sigma_{a} = \muf{A_{a}}$.
Thus~\equnm{pua-mu-1}~is rewritten as:
\begin{multline}
    \fa u\in\Ae\quantvrg \fa a \in A \quantsp\\
\lambda\matmul\muf{\Pua}\matmul\nu = 
                \lambda\matmul\nu
              + \lambda\matmul\muf{u}\matmul\sigma_{a}\matmul\nu 
              + \lambda\matmul\muf{\Pu}\matmul\sigma\matmul\nu 
\eqpnt\ee
% \notag
\label{q.pua-mu-2}
\end{multline}
Let~$\ekz$  be the representation of dimension~$2k+1$ described by the 
following $(1,k,k)$-block decomposition:
\begin{equation}
\eta=
\begin{pmatrix}1 & \lambda& 0\end{pmatrix}
\EqVrgInt
\fa a\in A \quantsmsp
\kappaf{a}=
\begin{pmatrix}
    1 & 0& \lambda\\
    0 & \muf{a}& \sigma_{a}\\
    0 & 0& \sigma
\end{pmatrix}
\EqVrgInt
\zeta=
\begin{pmatrix}0 \\ 0 \\ \nu\end{pmatrix}
\eqpnt
\notag
% \label{q.}
\end{equation}
It is routine to verify, by induction on the length of~$u$, and based 
on \lemme{enu-ser-rat}, that
$\msp
\lambda\matmul\muf{\Pu}\matmul\nu =
\eta\matmul\kappaf{u}\matmul\zeta
\msp$ for every~$u$ in~$\Ae$.

Let now~$\xi=
\begin{pmatrix}1 \\ 0 \\ \nu\end{pmatrix}$ 
and let~$s$ be the series realised by~$(\eta,\kappa,\xi)$:
\begin{equation}
    \fa u\in\Ae \quantsp
    \bra{s,u} = 1 + \card{\{v\in\Ae\mid v\in L \e\text{and}\e v \radix u\}}
\eqpnt
\notag
% \label{q.}
\end{equation}
In order to get the enumerating series of~$L$, we must retain the 
words that belong to~$L$ only, that is, to make the Hadamard product with 
the \emph{characteristic series}~$\UL{L}$ of~$L$:
\begin{equation}
\EnSc = s \hadam \UL{L}
    \eqvrg
\notag
% \label{q.}
\end{equation}
and~$\EnSc$ is $\N$-rational as the Hadamard product of two 
$\N$-rational series (this is often referred to as (another) Sch\"utzenberger
Theorem\footnote{%
   \cf \cite[Th. I.5.3]{BersReut88}, \cite[Prop. VI.7.1]{Eile74} or
   \cite[Cor. III.3.9]{Saka03}.}).
% \end{proof}
\EoP

\begin{remark}
The construction underlying the proof  
yields for $\EnL$ an 
$\N$-representation of dimension~$2\xmd k^{2}+k$.
\end{remark}

\subsection{Computation of the value of a word}

The description of a~$\N$-rational series~$s$ by a~$\N$-representation gives
a way to compute the coefficient of any word~$w$ in~$s$.
\theor{enu-ser-rat} thus solves \ipsofacto the problem of computing 
the value~$\EvL{w}$ of a word~$w$ in a rational abstract number 
system~$L$ (which occupies the whole Sect.~2 in~\cite{LecoRigo10}).

If~$s$ has a representation~$(\chi,\omega,\phi)$ of dimension~$n$, and if~$w$ 
is of length~$\ell$, the general algorithm consists in computing
$\msp\chi\matmul\omega(w _{i+1})=
(\chi\matmul\omega(w_i))\matmul\omega(a _{i+1})\msp$
for $i=0$ to $i=\ell-1$, where~$a_i$ is the~$i$-th letter of~$w$ and~$w_i$ 
its prefix of length~$i$. 
Every step costs~$2\xmd n^2$ operations, thus in total,
roughly~$2\xmd\ell\xmd n^2$ operations. 

It would be  not such a good idea, however, to apply this 
general algorithm to the representation of dimension~$2k^2+k$ 
we have obtained in the proof of \theor{enu-ser-rat} above.
Its particular form allows, in fact, to compute
with vectors and matrix of dimension~$k$ only.

Given as above the unambiguous automaton~$\Ac$ of dimension~$k$ which
recognises~$L$ and the corresponding $\N$-representation~$\lmn$, we 
associate a pair $\Tran{\alpha(w),\gamma(w)}$ with every~$w$ in~$\Ae$, 
where~$\alpha(w)$ and~$\gamma(w)$ are two (row) vectors of 
dimension~$k$,
$\alpha(w)$ with entries in~$\{0,1\}$,
$\gamma(w)$ with entries in~$\N$.
The pair~$\Tran{\alpha(w),\gamma(w)}$ is computed by induction on the 
length of~$w$ in the following 
way.
Let~$\ell$ be the length of~$w$, 
let
\begin{equation}
\alpha(\unAe) = \lambda \EqVrgInt
\beta(\unAe) = \lambda \EqVrgInt\text{and} \e 
\gamma(\unAe) = 0
\eqvrg
\notag
\end{equation}
and, for every $0 \leq i < \ell$, let
\begin{multline}
\alpha(w_{i+1}) = \alpha(w_{i})\matmul\muf{a_{i+1}} \EqVrgInt
\beta(w_{i+1}) = \alpha(w_{i})\matmul\sigma_{a_{i+1}} \EqVrgInt\\
\text{and} \e 
\gamma(w_{i+1}) =  \lambda + \beta(w_{i+1})+\gamma(w_{i})\matmul\sigma
\eqpnt\e
\notag
% \ee
% \label{q.com-val-wor}
\end{multline}
% Finally, let
% $\msp\Tran{\alpha(w),\gamma(w)}=\Tran{\alpha_{\ell},\gamma_{\ell}}$.

% All~$\alpha_{i}$, and thus $\alpha(w)$, 
All~$\alpha(w)$ 
have entries in~$\{0,1\}$ 
since~$\Ac$ is unambiguous.
As a simple reformulation of the preceding subsection,
we have~$\msp \EvL{w} = \gamma(w)\matmul\nu \msp$ 
if~$\msp\alpha(w)\matmul\nu=1\msp$, that is,
if $w$ is recognised by~$\Ac$ and thus in~$L$, $\EvL{w}$ undefined 
otherwise.
This algorithm, that is, the computation 
of~$\Tran{\alpha(w),\gamma(w)}$, 
costs roughly~$6\xmd\ell\xmd k^2$ operations. 

\begin{example}
Let us consider again the language $L_1$ and the~{DFA}~$\Acu$ of
\figur{rec-num-1}. 
We have thus:
\begin{equation}
\sigma_a =
\begin{pmatrix}
    0 & 0 \\ 0 & 0
\end{pmatrix}
\EqVrgInt
\sigma_b =
\begin{pmatrix}
    1 & 0 \\ 0 & 1
\end{pmatrix}
\EqVrgInt
\sigma  =
\begin{pmatrix}
    1 & 1 \\ 1 & 1
\end{pmatrix}
\eqpnt 
\notag 
\end{equation}
The computation of~$\EvLu{bbabb}$ for instance takes the following 
steps.
\begin{equation*}
\begin{array}{c|c|c|c|c}
	i & a_i & \alpha _i & \beta _i & \gamma _i \\
	0 & & (1,0) & (1,0) & (0,0) \\
	1 & b & (0,1) & (1,0) & (2,0) \\
	2 & b & (1,0) & (0,1) & (3,3) %\\
% 	3 & a & (1,0) & (0,0) & (7,6) \\
% 	4 & b & (0,1) & (1,0) & (15,13) \\
% 	5 & b & (1,0) & (0,1) & (29,29) 
\end{array}
\eee
\begin{array}{c|c|c|c|c}
	i & a_i & \alpha _i & \beta _i & \gamma _i \\
% 	0 & & (1,0) & (1,0) & (0,0) \\
% 	1 & b & (0,1) & (1,0) & (2,0) \\
% 	2 & b & (1,0) & (0,1) & (3,3) \\
	3 & a & (1,0) & (0,0) & (7,6) \\
	4 & b & (0,1) & (1,0) & (15,13) \\
	5 & b & (1,0) & (0,1) & (29,29) 
\end{array}
\end{equation*}
And finally $\EvLu{bbabb} = (29,29)\matmul\nu_1 = 29$.
\end{example}

\begin{remark}
    The computation of~$\Tran{\alpha(w),\gamma(w)}$ is very similar 
    to the construction called \emph{product of an automaton by a 
    skew action} in~\cite{SakaSouz08,Souz08}.
\end{remark}

\section{Representation of recognisable subsets of numbers}

If~$s$ is an $\N$-rational series, that is, a map
$\msp s\colon\Ae\rightarrow\N\msp$,
it is well known that for any recognisable set of numbers~$X$,
$s^{-1}(X)$ is a rational set of~$\Ae$
(see~\cite[Corol. III.2.4]{BersReut88},
\cite[Th. VI.10.1]{Eile74} or~\cite[Corol. III,4,21]{Saka03}, 
for instance).
\theor{enu-ser-rat} thus directly implies the following statement,  
which has also been proved without reference to it 
in~\cite{LecoRigo01} and in~\cite{KrieEtAl09}. 

\begin{corollary}[\cite{LecoRigo01}]
\label{c.rec-num}%
A recognisable set of numbers is $L$-recognisable
in any rational abstract numeration system~$L$.
\end{corollary}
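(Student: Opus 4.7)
My plan is to derive \corol{rec-num} directly from \theor{enu-ser-rat} combined with the classical fact --- explicitly recalled just before the corollary --- that the pre-image under an $\N$-rational series of a recognisable subset of~$\N$ is a rational language.

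First I would reformulate the target. By \defin{enu-ser}, for every $w\in L$ we have $\bra{\EnL,w}=\EvL{w}+1$, whereas for every $w\notin L$ we have $\bra{\EnL,w}=0$ (\remar{enu-ser}). Consequently, for any $X\subseteq\N$, the language of $L$-representations of elements of~$X$ can be written
\begin{equation}
\RL{X}=\{\RL{n}\mid n\in X\}=\EnL^{-1}(X+1)\eqvrg
\notag
\end{equation}
where $X+1=\{n+1\mid n\in X\}$. The equality holds without having to intersect with~$L$ because $X+1\subseteq\{1,2,3,\ldots\}$ does not contain~$0$, so no word outside of~$L$ lies in $\EnL^{-1}(X+1)$.

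Second, I would observe that if $X$ is a recognisable subset of the monoid $(\N,+)$, then so is its translate $X+1$: recognisable subsets of~$\N$ are exactly the ultimately periodic ones, and this family is trivially closed under shifts. Now \theor{enu-ser-rat} asserts that $\EnL$ is $\N$-rational, so the cited pre-image theorem (\cite[Corol. III.2.4]{BersReut88}, \cite[Th. VI.10.1]{Eile74}, \cite[Corol. III.4.21]{Saka03}) applies and gives that $\EnL^{-1}(X+1)$ is a rational subset of~$\Ae$. Combined with the identification above, this yields that $\RL{X}$ is rational, i.e.\ $X$ is $L$-recognisable.

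There is essentially no serious obstacle here, since all the conceptual work has been done in \theor{enu-ser-rat}; the only point demanding care is the offset by~$1$ in \defin{enu-ser}, which forces the argument to pass through $X+1$ rather than~$X$ itself, together with the verification that $0\notin X+1$ keeps the equality $\RL{X}=\EnL^{-1}(X+1)$ free of spurious words from $\Ae\setminus L$.
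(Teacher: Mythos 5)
Your proof is correct and follows essentially the same route as the paper, which derives \corol{rec-num} directly from \theor{enu-ser-rat} together with the cited fact that the pre-image of a recognisable set of numbers under an $\N$-rational series is rational. Your explicit handling of the offset by~$1$ (passing to $X+1$ and noting that $0\notin X+1$ excludes words outside~$L$) merely spells out a detail the paper leaves implicit when it says the corollary ``requires formally no proof'' after \theor{enu-ser-rat}.
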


If~\corol{rec-num} requires formally no proof after the characterisation of 
rational abstract numeration systems given by~\theor{enu-ser-rat}, 
it is interesting to further  
investigate the construction which, given~$L$ and a recognisable set 
of numbers~$X$ computes an automaton which recognises the 
set~$\RL{X}$.
The computation method used in the preceding section (which is not 
the mere application of the general result that 
yields~\corol{rec-num}) allows to establish easily the following 
statement.

\begin{proposition}
    \label{p.rep-rec-sub}
Let~$L$ be a rational language over~$\Ae$ recognised by a 
\emph{deterministic} automaton of dimension~$k$.
For any integers~$p$ and~$r<p$, let~$X_{p,r}= p\N +r\msp$ be the set 
of integers congruent to~$r$ modulo~$p$.
Then the language~$\RL{X_{p,r}}$ of representations of numbers 
in~$X_{p,r}$ is recognised by a deterministic automaton with at 
most~$k\xmd p^{k}$ states.
\end{proposition}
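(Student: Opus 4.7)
The plan is to construct directly, from a deterministic automaton $\Ac = \jsAutUn{Q, A, \delta, q_{0}, T}$ of dimension $k$ accepting $L$, a deterministic automaton $\Bc$ for $\RL{X_{p,r}}$ that runs in parallel a copy of $\Ac$ and the vector $\gamma(w) \bmod p$ coming from the algorithm of Section~3.3.

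The starting observation is that, since $\Ac$ is deterministic, the row vector $\alpha(w) = \lambda \matmul \muf{w}$ is the characteristic vector $e_{q(w)}$ of the state $q(w)$ reached after reading $w$, with a single $1$ in position $q(w)$ and $0$ elsewhere. Consequently, in the recursion
\begin{equation*}
\gamma(w\xmd a) = \lambda + \alpha(w) \matmul \sigma_{a} + \gamma(w) \matmul \sigma
\EqVrgInt
\end{equation*}
the contribution $\alpha(w)\matmul\sigma_{a}$ depends on $w$ only through $q(w)$, namely it is the $q(w)$-th row of $\sigma_{a}$. Reducing modulo $p$, the update rule for $\gamma(w)\bmod p$ is linear in $\gamma(w)$ with integer coefficients, so $\gamma(wa) \bmod p$ depends on $w$ solely through $q(w)$ and $\gamma(w) \bmod p$.

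I would then take $\Bc$ to be the deterministic automaton whose state set is $Q \x (\Z/p\Z)^{k}$, of cardinality $k\xmd p^{k}$, with initial state $(q_{0}, 0)$, with transition
\begin{equation*}
(q, d) \xrightarrow{\; a \;} \Bigl(\delta(q,a),\; \bigl[\lambda + e_{q}\matmul\sigma_{a} + d\matmul\sigma\bigr]\bmod p\Bigr)
\EqVrgInt
\end{equation*}
and with set of final states $\{(q,d) \mid q \in T \e\text{and}\e d\matmul\nu \equiv r \pmod p\}$. A straightforward induction on $|w|$ shows that reading $w$ from the initial state of $\Bc$ ends in the state $(q(w), \gamma(w) \bmod p)$. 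Recalling from Section~3.3 that $\EvL{w} = \gamma(w) \matmul \nu$ whenever $w \in L$, a word $w$ is accepted by $\Bc$ if and only if $w \in L$ and $\EvL{w} \equiv r \pmod p$, that is, if and only if $w \in \RL{X_{p,r}}$.

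There is no real obstacle in this approach: the only point worth checking is the compatibility of the update rule with reduction modulo $p$, which is immediate from its linearity in $\gamma$ with integer coefficients. The bound of $k\xmd p^{k}$ on the number of states is read off the construction, and one may of course further reduce $\Bc$ by restricting it to its accessible part.
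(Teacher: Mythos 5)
Your proof is correct and is essentially the same as the paper's: you track the state of the deterministic automaton (the row-monomial vector $\alpha(w)$) together with $\gamma(w)$ reduced modulo $p$, use the recursion of Section~3.3 for the transitions, and accept when the first component is final and the second gives value $r$ modulo $p$ after multiplication by $\nu$. The only cosmetic difference is that the paper takes as state set the reachable pairs inside $Q\times(\Z/p\Z)^{k}$ rather than the full product, which is exactly your remark about restricting to the accessible part; the bound $k\,p^{k}$ is the same.
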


\begin{proof}
Let~$\Ac$ be an automaton, with set of states~$Q$ of 
cardinal~$k$, which recognises~$L$ and~$\lmn$ the corresponding 
$\N$-representation.
If~$\Ac$ is deterministic, then~$\lambda$ and~$\mu$ are row monomial 
and so are all~$\alpha(w)$, for~$w$ in~$\Ae$, which 
are thus in 1-1 correspondence with the elements of~$Q$.

Let~$\Cc$ be the automaton whose set of states is
\begin{equation}
    R= \{\Tran{\alpha(w),\delta(w)}\mid w\in\Ae\}
    \ee\text{where}\ee
   \delta(w) = \gamma(w) \mod p
   \eqpnt 
    \notag
%     \label{}
\end{equation}
Thus,
$\msp R \subseteq Q \times (\Z/p\Z)^{k}\msp$.
The transitions of~$\Cc$ are defined by, for every~$a$ in~$A$:
\begin{equation}
    \fa w\in\Ae\quantvrg
    \fa a \in A \quantsp
    \Tran{\alpha(w),\delta(w)}\pathaut{a}{\Cc}\Tran{\alpha(wa),\delta(wa)}
    \eqpnt 
    \notag
%     \label{}
\end{equation}
The initial state of~$\Cc$ is~$(\lambda,0)$ and its final states are 
those~$\Tran{\alpha(w),\delta(w)}$ where~$\alpha(w)$ is final 
in~$\Ac$ and  $\msp\delta(w)\matmul\nu = r  \mod p\msp$.
% From~\equnm{com-val-wor} 
It then follows that the language accepted 
by~$\Cc$ is~$\RL{X_{p,r}}$.
\end{proof}

\begin{remark}
If we start from an unambiguous automaton~$\Ac$ of dimension~$k$, 
the same method yields a deterministic automaton~$\Cc$ with at 
most~$2^{k}p^{k}$ states.
\end{remark}

\begin{example}
The automaton built in this way from~$\Acu$ and for the 
recognisable set of numbers~$3\xmd\N +1$ is the automaton~$\Cc_1$ 
shown at
\figur{rec-num-2}
(this automaton is not minimal; its minimal quotient has only 
8~states).
\end{example}

\setlength{\lga}{3.8cm}\setlength{\lgb}{3cm}
\renewcommand{\ForthBackEdgeOffset}{3}
\begin{figure}[ht]
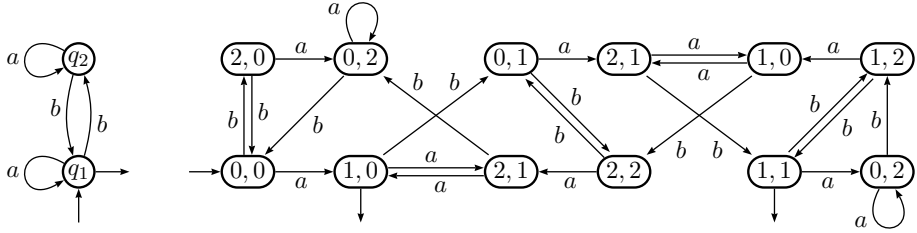

\centering
\VCDraw{%.5
\begin{VCPicture}{(-9,-1.4)(17,4.2)}
\VCPut{(-6.5,0)}{%    
\State[q_1]{(0,0)}{Q1}
\State[q_2]{(0,\lgb)}{Q2}
}% 
\Initial[s]{Q1}\Final{Q1}
\LoopW[.5]{Q1}{a}
\LoopW[.5]{Q2}{a}
\ArcR{Q1}{Q2}{b}
\ArcR{Q2}{Q1}{b}
\StateVar[0,0]{(-2,0)}{A}
\StateVar[2,0]{(-2,\lgb)}{B}
\StateVar[1,0]{(1,0)}{C}
\StateVar[0,2]{(1,\lgb)}{D}
\StateVar[2,1]{(5,0)}{E}
\StateVar[0,1]{(5,\lgb)}{F}
\StateVar[2,2]{(8,0)}{G}
\StateVar[2,1]{(8,\lgb)}{H}
\StateVar[0,2]{(15,0)}{I}
\StateVar[1,0]{(12,\lgb)}{J}
\StateVar[1,1]{(12,0)}{K}
\StateVar[1,2]{(15,\lgb)}{L}
\Initial{A} 
\Final[s]{K}\Final[s]{C}
\EdgeR[0.75]{E}{D}{b}
\EdgeL{D}{A}{b}
\EdgeL[0.75]{C}{F}{b}
\EdgeL[0.75]{J}{G}{b}
\EdgeR[0.75]{H}{K}{b}
\EdgeL{I}{L}{b}
\EdgeR{A}{C}{a}
\EdgeL{F}{H}{a}
\EdgeL{G}{E}{a}
\EdgeL{B}{D}{a}
\EdgeR{L}{J}{a}
\EdgeR{K}{I}{a}
\ForthBackOffset
\EdgeL{A}{B}{b}
\EdgeL{B}{A}{b}
\EdgeL{C}{E}{a}
\EdgeL{E}{C}{a}
\EdgeL{F}{G}{b}
\EdgeL{G}{F}{b}
\EdgeL{K}{L}{b}
\EdgeL{L}{K}{b}
\EdgeL{H}{J}{a}
\EdgeL{J}{H}{a}
\VarLoopOn%[.5][.5]
\LoopS{I}{a}
\LoopN{D}{a}
\end{VCPicture}}
\caption{A {DFA} recognising the set $3 \N +1$ in the ANS $\Lu$.}
        \label{f.rec-num-2}
\end{figure}

\begin{remark}
In~\cite{LecoRigo10}, another construction has been given for the 
same purpose. 
The automaton~$\Dc$ built with this other method and which 
recognises~$\RL{X_{p,r}}$ is not deterministic, but 
\emph{codeterministic} and has, roughly, $k\xmd p^{k+1}$~states.
Since~$\Dc$ is codeterministic, its determinisation yields the 
minimal automaton of~$\RL{X_{p,r}}$ and thus, thanks to 
\propo{rep-rec-sub}, does not produce an exponential blow-up.
We do not know of a direct proof of this fact.
\end{remark}

\begin{remark}
    \label{r.ref-dec}%
After the submitted version was written (and sent), we have  
learned of the reference~\cite{KrieEtAl09}.
Not only \corol{rec-num} is established there, but with a method of 
proof which is very similar to ours.
Our \lemme{enu-ser-rat} is Lemma~1 in~\cite{KrieEtAl09}.
The term \emph{representation} is not used there but the 
matrices~$\mu(a)$, $\sigma_{a}$ and~$\sigma$ are defined (under other 
notation) and used to give the same proof of \equat{pua-mu-1} 
(Lemma~2 in~\cite{KrieEtAl09}).

Afterwards, \cite{KrieEtAl09} develops in another direction than this 
paper: it proves lower bounds for the state complexity 
of~$\RL{X_{p,r}}$ and shows that the property corresponding to 
\corol{rec-num} does not hold for context-free languages. 
\end{remark}

\begin{remark}
If the numeration system considered is a positional numeration system 
(and still a rational one), and under some supplementary hypotheses, 
then the exact number of states for the minimal automaton 
of~$\RL{X_{p,r}}$ can be computed (\cf~\cite{CharEtAl10}).
\end{remark}

\section{Proof of \theor{enu-ser-dec}}

The image of a rational language by a rational relation (or 
transduction) is a rational language;
this classical result, due to Nivat and called Evaluation Theorem 
in~\cite{Eile74}, extends to rational series, as we state now 
(\cf~\cite{Saka03}).

\begin{proposition}
Let
$\msp\varphi\colon\Ae\rightarrow\Be\msp$
be an unambiguous rational relation 
and~$s$ a $\K$-rational series over~$\Ae$.
Then the series 
\begin{equation}
    \UL{\varphi}(s) = \sum_{w\in\Ae} \bra{s,w}\xmd \UL{\varphi}(w)
                  = \sum_{u\in\Be} \bra{s,\UL{\varphi^{-1}}(u)}\xmd u
    \eqvrg
    \label{q.ser-ima}
\end{equation}
if it is defined, is a $\K$-rational series over~$\Be$.
\end{proposition}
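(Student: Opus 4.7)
My approach is to realise the unambiguous rational relation~$\varphi$ by an unambiguous finite transducer and then to mirror, at the level of weighted automata, the computation of the image series. Since $\varphi$ is unambiguous and rational from~$\Ae$ to~$\Be$, it is realised by an unambiguous $\B$-automaton $\Tc$ over the product monoid $\Ae\x\Be$: every pair $(w,u)$ in the graph of~$\varphi$ is the label of exactly one accepting computation of~$\Tc$. Let $(\lambda,\mu,\nu)$ be a $\K$-representation of~$s$ of dimension~$n$, and let $P$ be the set of states of~$\Tc$.

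I would then form the synchronised product of these two objects: a $\K$-automaton $\Hc$ over $\Ae\x\Be$ whose set of states is $P\x\{1,\ldots,n\}$, where each transition $p\pathaut{(a,v)}{\Tc} q$ of~$\Tc$ is lifted to transitions $(p,i)\pathaut{(a,v)}{\Hc}(q,j)$ weighted by the $(i,j)$-entry of $\muf{a}$, with initial and final vectors obtained by tensoring $\lambda$ and $\nu$ with the initial and final vectors of~$\Tc$. By unambiguity of~$\Tc$, the sum of the weights of the accepting computations of~$\Hc$ labelled by a given pair $(w,u)$ equals $\bra{s,w}$ when $u\in\varphi(w)$ and~$0$ otherwise, so the behaviour of~$\Hc$ is the ``graph series''
\begin{equation*}
   r \,=\, \sum_{(w,u)\in\varphi} \bra{s,w}\xmd (w,u) \EqPnt
\end{equation*}

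Finally, I would apply the monoid morphism $\pi\colon\Ae\x\Be\to\Be$, $(w,u)\mapsto u$, which at the level of~$\Hc$ amounts to simply relabelling each transition by its $\Be$-component. This produces a $\K$-automaton over~$\Be$ --- possibly with $\unBe$-labelled transitions coming from pairs of the form $(a,\unBe)$, but these can be eliminated in the standard way since~$\Be$ is a free monoid. Its behaviour is precisely $\UL{\varphi}(s)$ as defined by~\equat{ser-ima}, which proves rationality.

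The delicate point is the proviso ``if it is defined'' in the statement. The automaton construction is always valid, but its behaviour is a genuine series on~$\Be$ only if the coefficient $\sum_{w\in\varphi^{-1}(u)}\bra{s,w}$ is a well-defined element of~$\K$ for every~$u\in\Be$. Unambiguity of~$\varphi$ prevents double counting of any given pair $(w,u)$, but does \emph{not} prevent infinitely many distinct $w\in\supp(s)$ from being mapped to the same~$u$; ruling out this situation is exactly the content of the proviso. When it holds, the construction above yields a $\K$-representation of $\UL{\varphi}(s)$, establishing that it is $\K$-rational.
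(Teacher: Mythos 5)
The paper never proves this proposition: it is quoted as a classical result (Nivat's Evaluation Theorem in Eilenberg's terminology, extended to series, with references to Eilenberg and Sakarovitch) and then used as a black box in the proof of Theorem~2. So the only comparison possible is with the classical argument, and your construction is indeed that classical route: realise~$\varphi$ by an unambiguous transducer, take a product with a linear representation of~$s$ so as to get a weighted automaton over~$\Ae\x\Be$ whose behaviour is $\sum_{(w,u)\in\varphi}\bra{s,w}\xmd(w,u)$ --- unambiguity being exactly what guarantees that each pair is counted once --- and then project onto~$\Be$. Up to that point the argument is sound (with the minor caveat that an unambiguous transducer for an arbitrary rational relation may also need transitions with input label~$\unAe$; the lift still works with $\muf{\unAe}$ taken to be the identity matrix, but this should be said, and it feeds the difficulty below).

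The step you cannot wave through is the last one. After projection you hold a $\K$-automaton over~$\Be$ whose transitions are labelled by words of~$\Be$, possibly by~$\unBe$. Spelling out long labels letter by letter is harmless, but eliminating the spontaneous ($\unBe$-labelled) transitions of a \emph{weighted} automaton is not always possible, and the freeness of~$\Be$ is irrelevant to it: one must compute, in the semiring of matrices over~$\K$, the star of the matrix of spontaneous weights, and this star need not exist --- an $\unBe$-cycle of weight~$1$ with $\K=\N$ already defeats it. This is precisely where the proviso ``if it is defined'' has to be spent: one must show that summability of the family $\bigl(\bra{s,w}\xmd\UL{\varphi}(w)\bigr)_{w\in\Ae}$ makes the projected automaton valid, its spontaneous part eliminable, and its behaviour equal to $\UL{\varphi}(s)$; that is the actual technical content of the Evaluation Theorem for series in the cited references. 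Your closing paragraph correctly identifies the proviso as ruling out infinite fibres meeting $\supp(s)$, but it never connects that hypothesis to the legitimacy of the $\unBe$-removal, so as written the proof has a gap at exactly the delicate point it singles out.
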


It is this result that was used in~\cite{ChofGold95} for the proof of 
\theor{enu-ser-rat}. 

\begin{proof}[Proof of \theor{enu-ser-dec}]
Let~$s$ in $\NRat\Ae$ and
$\msp L=\supp s\msp$ in~$\Rat\Ae$.
The set~$L$ is totally ordered by the radix order
\begin{equation}
    L = \{w_{0} < w_{1} < w_{2} < \ldots < w_{n} < \ldots \}
    \notag
%     \label{}
\end{equation}
and~$\SuccL$ is 
the function from~$\Ae$ into itself
whose domain is~$L$ and which maps every~$w_{i}$ to~$w_{i+1}$.
It is well-known that~$\SuccL$ is a rational function 
(\cite{BertEtAl07,FrouSak10}) and 
hence unambiguous (\cite{Eile74,Saka03}).
It then follows that the series
\begin{equation}
    \UL{\SuccL}(s) = \sum_{i=0}^{i=\infty} \bra{s,w_{i}}\xmd w_{i+1}
    \notag
%     \label{}
\end{equation}
is an $\N$-rational series and
$\msp t = s - \UL{\SuccL}(s)\msp$
is a $\Z$-rational series.
Now, $s$ is the enumerating series~$\EnL$ of the abstract numeration system~$L$ 
if, and only if, for every positive integer~$i$,
$\msp \bra{t,w_{i}} = 1\msp$, that is, if, and only if, 
$\msp t - \UL{L\bk\{w_{0}\}} = 0\msp$, a condition which is known to be decidable 
as~$\Z$ is a sub(semi)ring of a field (\cf \cite{Eile74,Saka03}).
\end{proof}

\section{Problems and future work}

Looking at abstract number systems as $\N$-rational series naturally
leads to two families of questions.
The first family consists in questions on $\N$-rational series which
ask to which extent the series is related to abstract number systems;
the second in questions which generalise to $\N$-rational series
questions that are usually considered for (abstract) numeration
systems.

An example of questions in the first family is to ask if it is
decidable whether a given $\N$-rational series is the enumeration (in
a radix ordering) of its (rational) support in a certain, and
unknown, abstract numeration system.
This seems to be rather a difficult problem.
An obvious necessary condition for a series to be a positive instance
of this problem is itself a non trivial problem that can be
formulated in the following way.

\begin{conjecture}
	It is decidable whether an $\N$-rational series is a monotone
	increasing function (for a given order of letters).
\end{conjecture}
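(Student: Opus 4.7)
The plan is to reduce the conjecture to a positivity question for a $\Z$-rational series, exactly as in the proof of \theor{enu-ser-dec}, and then to try to exploit the specific structure of that series. Given an $\N$-rational series $s$, let $\Succ{\Ae}$ denote the radix successor function on $\Ae$, which is unambiguous rational. By the evaluation proposition applied to $\Succ{\Ae}$, the series $\msp \UL{\Succ{\Ae}}(s) = \sum_{w\in\Ae} \bra{s,w}\xmd \Succ{\Ae}(w)\msp$ is itself $\N$-rational, and its coefficient at any $v \in \Ae \setminus \{\unAe\}$ equals the coefficient in $s$ of the radix predecessor of $v$. Hence $s$ is monotone increasing in the radix order if, and only if, the $\Z$-rational series $\msp t = s - \UL{\Succ{\Ae}}(s)\msp$ has non-negative coefficient at every $v \neq \unAe$.

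This translates the conjecture into a positivity question for a $\Z$-rational series. In full generality this is the celebrated positivity problem, which is open: already on a one-letter alphabet it subsumes the still-unresolved sign problem for $\Z$-linear recurrences, so no off-the-shelf application of the Schützenberger-type decidability used for \theor{enu-ser-dec} will work here. The hope that underlies the plan is that the very particular form of $t$ — an $\N$-rational series minus its image under the specific rational function $\Succ{\Ae}$ — places it outside the scope of the generic obstructions and allows a dedicated decision procedure.

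The angle of attack I would pursue is to split the monotonicity test along the length filtration that underlies the radix order. The across-length part requires that the maximum of $s$ on $\Alph^{n}$ not exceed the minimum of $s$ on $\Alph^{n+1}$, a condition on the extremal behaviour of $s$ that lends itself to a tropical (max-plus) reformulation of a representation $\lmn$ of $s$ and, from there, to classical analyses of growth rates. The within-length part requires lexicographic monotonicity of $s$ on $\Alph^{n}$, uniformly in $n$, and this is the genuine obstacle: even after the split, it is still implicitly a positivity test on a $\Z$-rational series. I would attack it by seeking a small-model-style theorem that bounds the length of a minimal radix-consecutive counterexample $\bigl(w,\Succ{\Ae}(w)\bigr)$ with $\bra{s,w} > \bra{s,\Succ{\Ae}(w)}$ in terms of the dimension of $\lmn$; producing or refuting such a bound is, to my mind, the crux of the conjecture, and it is the step on which I would expect most of the work — and most of the risk of failure — to concentrate.
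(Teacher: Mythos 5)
There is no proof to compare against here: the statement you are addressing is stated in the paper as a \emph{conjecture}, i.e.\ as an open problem (the paper explicitly calls the underlying question ``rather a difficult problem''), and your text is likewise not a proof but a research plan. Your opening reduction is correct as an equivalence: since the radix order on $\Ae$ has order type $\omega$, monotonicity of $s$ is equivalent to $\bra{s,w}\leq\bra{s,\Succ{\Ae}(w)}$ for all $w$, and since $\Succ{\Ae}$ is an unambiguous rational function, $\UL{\Succ{\Ae}}(s)$ is $\N$-rational and the condition becomes nonnegativity of the $\Z$-rational series $t=s-\UL{\Succ{\Ae}}(s)$ away from the empty word. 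But this is precisely where the analogy with the paper's proof of \theor{enu-ser-dec} breaks down, and you say so yourself: there the criterion was an \emph{equality} ($t-\UL{L\bk\{w_{0}\}}=0$), decidable because $\Z$ embeds in a field, whereas here the criterion is an \emph{inequality} on all coefficients. Nonnegativity of a $\Z$-rational series is not known to be decidable even over a one-letter alphabet (the Positivity Problem for linear recurrences) and is undecidable in general over larger alphabets, so the reformulation by itself decides nothing.

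The decisive step of your plan --- a small-model theorem bounding the length of a minimal radix-consecutive counterexample $\bigl(w,\Succ{\Ae}(w)\bigr)$ in terms of the dimension of a representation of $s$, or some other way of exploiting the special shape $t=s-\UL{\Succ{\Ae}}(s)$ to escape the generic positivity obstruction --- is exactly the content of the conjecture, and it is neither proved nor made plausible by an argument in your text (the tropical reformulation of the across-length condition is also only sketched, and the within-length condition is, as you note, still a positivity test). So the proposal should be read as a sensible reformulation of the open problem, not as a solution: the gap is the entire decidability argument.
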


A result due to Honkala \cite{Honk86} provides a kind of converse of
\corol{rec-num} in the case of $p$-ary numeration systems and states
that it decidable whether a $p$-recognisable set of numbers is
recognisable.
The generalisation of this result to larger class of
numeration systems has been recently studied 
in~\cite{BellEtAl09,Char09}.
Its generalisation to abstract number systems has been stated as a 
problem in~\cite{HonkRigo04}.
It is also a typical example of a question in the second family.

\begin{conjecture}
It is decidable whether the set of coefficients of an $\N$-rational
series is a recognisable set of numbers.
\end{conjecture}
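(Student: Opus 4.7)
The plan is to handle separately the cases of a finite and an infinite image of $s$, and in the latter to reduce to equivalence of $\Z$-rational series after obtaining a uniform bound on possible periods.

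\emph{Step 1 (finite image).} Given an $\N$-representation $\lmn$ of $s$ of dimension $n$, one first decides whether the image
\begin{equation}
\mathrm{Im}(s) = \{\lambda\matmul\muf{w}\matmul\nu \mid w \in \Ae\}\notag
\end{equation}
is bounded. Boundedness of an $\N$-rational series is a well-studied decidable property, essentially through the analysis of the multiplicative semigroup generated by the matrices $\muf{a}$ inside $\N^{n \x n}$ and of the growth rates of its orbit on the vectors $\lambda$ and $\nu$. If the image is finite it is trivially recognisable and the problem is settled; otherwise we proceed to Step~2.

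\emph{Step 2 (infinite image).} A subset $R \subseteq \N$ is recognisable iff it is ultimately periodic, so one must decide whether there exist $p \geq 1$ and $q \geq 0$ such that, for every $n \geq q$, $\msp n \in \mathrm{Im}(s) \Longleftrightarrow n + p \in \mathrm{Im}(s)\msp$. For a \emph{fixed} candidate $(p,q)$ this condition should be expressible as the equality of two effectively computable $\Z$-rational series in $\ZAe$ --- one of them obtained by reducing $s$ modulo $p$ and lifting back --- and is thus decidable by the same $\Z$-equivalence algorithm as in the proof of \theor{enu-ser-dec}. Note that for each fixed $p$, the residue pattern $\{c \mod p \mid c \in \mathrm{Im}(s)\}$ is itself computable from the image of the finite $\N/p\N$-rational series $s \mod p$.

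\emph{Main obstacle.} The whole strategy hinges on producing an effective \emph{a priori} bound on $p$ (and $q$) in terms of the given representation, since without such a bound one cannot enumerate candidate pairs. Such a bound is highly non-obvious: the image of an $\N$-rational series can be generated by rich multiplicative semigroups of matrices in $\N^{n \x n}$, and a uniform control of their residues modulo an arbitrary $p$ would demand a structural theorem in the spirit of Honkala's argument \cite{Honk86} and of its extensions \cite{BellEtAl09,Char09}, which are presently available only for sets arising from numeration systems with strong arithmetic structure. Obtaining such a bound --- or, conversely, proving the problem undecidable --- is precisely the open content of the conjecture and, in my view, its principal point of difficulty.
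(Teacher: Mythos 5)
You are attempting one of the paper's two concluding \emph{conjectures}: the authors state it explicitly as an open problem (in the spirit of Honkala's decidability result for $p$-ary systems and its extensions) and give no proof, so there is no argument in the paper to compare yours with. Your text is not a proof either, and you say so yourself: it reduces the question to exhibiting an effective a priori bound on the candidate period $p$ (and preperiod $q$), and that missing bound is precisely the content of the conjecture. Step~1 is unobjectionable --- finiteness of the image of an $\N$-rational series is indeed decidable, and a finite set of numbers is trivially recognisable --- but it does not touch the real difficulty.

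Beyond the acknowledged obstacle, Step~2 contains an unjustified claim even for a \emph{fixed} pair $(p,q)$. The condition to be tested concerns the image $\mathrm{Im}(s)\subseteq\N$, i.e.\ the \emph{set of values} of the coefficients, which is not in any evident way the coefficient family of a series over $\Ae$; ``reducing $s$ modulo $p$ and lifting back'' yields information about the residue of the coefficient of each individual word, not about which residues are attained by arbitrarily large elements of $\mathrm{Im}(s)$, so the asserted reformulation as an equality of two effectively computable $\Z$-rational series is not established. What does follow from the classical result recalled at the start of Section~4 of the paper is that $s^{-1}(X)$ is effectively rational for every recognisable $X\subseteq\N$; hence membership of any single integer in $\mathrm{Im}(s)$ is decidable, but the condition ``for all $n\geq q$, $n\in\mathrm{Im}(s)\Longleftrightarrow n+p\in\mathrm{Im}(s)$'' is an infinite conjunction of such tests and you give no argument that it can be decided. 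So both the enumeration of candidate periods and the test for each candidate are open in your outline; the statement remains a conjecture, exactly as the paper presents it.
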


\section{Summary}

In this short paper, we have presented a new idea for the study of 
abstract number systems, which brings to the subject the whole power 
of weighted automata theory.
% The first results show that this theory is fully relevant.
In return, the subject of abstract number systems naturally opens new 
questions for the theory of $\N$-rational series.

\section*{Acknowledgements}
The authors are grateful to Sylvain Lombardy for the discussions with him on
this problem. They also have greatly benefited from having access to the
chapter of Lecomte and Rigo on abstract numeration systems in the book 
\emph{Combinatorics, {A}utomata and {N}umber {T}heory}
to be published by Cambridge University Press.

The authors are particularly indebted to Michel Rigo who, after 
seeing a first version of this paper, gave them the information that 
Theorem~1 was already known, the pertinent references, and the 
encouragements to nevertheless write down a complete version.

\end{document}